\documentclass[a4paper,10pt,aps,pra,notitlepage,superscriptaddress,nofootinbib,tightenlines]{revtex4-1}

\usepackage{graphicx}
%
%
% insert here the call for the packages your document requires
%\usepackage{latexsym}
\usepackage{amsmath}
\usepackage{amsfonts} %for field R,Z,Q,C
\usepackage{amssymb}
\usepackage{color}
\usepackage{amsthm}
\usepackage{dsfont} %for probability P
\usepackage{fullpage}

\newcommand{\D}{\widetilde{D}}

\DeclareMathOperator{\supp}{supp}
\DeclareMathOperator{\Tr}{Tr}

\newtheorem{lemma}{Lemma}
\newtheorem{proposition}[lemma]{Proposition}
\newtheorem{theorem}[lemma]{Theorem}

\newtheorem{definition}[lemma]{Definition}
\newtheorem{example}{Example}

\begin{document}

\title{\Large Investigating Properties of a Family of Quantum R\'enyi Divergences}

\author{Simon M.~Lin}
\affiliation{Centre for Quantum Technologies, National University of Singapore, Singapore 117543, Singapore}
\email{email: a0067236@nus.edu.sg}
\author{Marco Tomamichel}
\affiliation{School of Physics, The University of Sydney, Sydney 2006, Australia}
\affiliation{Centre for Quantum Technologies, National University of Singapore, Singapore 117543, Singapore}

\begin{abstract}
Audenaert and Datta recently introduced a two-parameter family of relative R\'{e}nyi entropies, known as the $\alpha$-$z$-relative R\'{e}nyi entropies. The definition of the $\alpha$-$z$-relative R\'{e}nyi entropy unifies all previously proposed definitions of the quantum R\'{e}nyi divergence of order $\alpha$ under a common framework. Here we will prove that the $\alpha$-$z$-relative R\'{e}nyi entropies are a proper generalization of the quantum relative entropy by computing the limit of the $\alpha$-$z$ divergence as $\alpha$ approaches one and $z$ is an arbitrary function of $\alpha$. We also show that certain operationally relevant families of R\'enyi divergences are differentiable at $\alpha = 1$. Finally, our analysis reveals that the derivative at $\alpha = 1$ evaluates to half the relative entropy variance, a quantity that has attained operational significance in second-order quantum hypothesis testing and channel coding for finite block lengths.
\end{abstract}

\maketitle

\section{Introduction}

In classical information theory, the Shannon entropy \cite{Shannon} measures the average information content of a random variable, while the Kullback-Leibler (KL) divergence \cite{KL} (also known as relative entropy) is a non-symmetric measure of the difference between probability distributions. 
%Roughly speaking, the Shannon entropy is a measure of the uncertainty and unpredictability of the information content, while the Kullback-Leibler divergence is a measure of how well a probability distribution approximates another probability distribution, for the purposes of measuring information content. In addition, the Shannon entropy also gives a bound on the maximal lossless compression possible, and the performance of data compression algorithms are usually benchmarked against the Shannon entropy.

From a mathematical perspective, the Shannon entropy \cite{Shannon} is uniquely characterized by a set of axioms, that is, the Shannon entropy satisfies the axioms, and the only functional on the set of probability distributions that satisfies the axioms is the Shannon entropy. By relaxing one of the axioms, we obtain a family of functionals on the set of probability distributions indexed by a parameter $\alpha$, called the R\'{e}nyi entropies of order $\alpha$ \cite{Renyi}. In particular, when $\alpha$ tends to $1$, we recover our usual notion of the Shannon entropy \cite{Renyi}. In a similar spirit, we also have that the KL divergence is uniquely characterized by a set of axioms, and that we may also obtain a family of functions on the set of pairs of probability distributions indexed by a parameter $\alpha$, called the R\'{e}nyi divergences of order $\alpha$ \cite{Renyi}. In particular, when $\alpha$ tends to $1$, we recover our usual notion of the KL divergence \cite{Renyi}. 

The families of R\'{e}nyi entropies of order $\alpha$ and R\'{e}nyi divergences of order $\alpha$ contain several other entropies and divergences of operational significance, which are used e.g.\ in cryptography and information theory. For instance, when $\alpha$ tends to infinity, we obtain the minimum entropy and maximum relative entropy, where the minimum entropy quantifies how hard it is to guess a random variable. More generally, the R\'{e}nyi entropies and divergences are a prominent tool in information theory (see, e.g.~\cite{Csiszar}), and have an operational interpretation, for example in hypothesis testing or channel coding.

A natural question to ask is if the notion of R\'enyi divergence has a natural extension in the quantum setting. Indeed, by replacing the notion of probability distributions with the notion of positive semi-definite operators on a finite-dimensional Hilbert space, we could obtain the quantum counterparts of Shannon entropy and KL divergence, namely the von Neumann entropy~\cite{vN}, and the quantum relative entropy~\cite{Umegaki}. 

So far, several definitions for the quantum R\'{e}nyi divergences of order $\alpha$ have been proposed. The most widely used definition is based on Petz' quasi-entropies (see, e.g.~\cite{petz}). More recently a different definition has been proposed independently by M\"{u}ller-Lennert \emph{et al.}~\cite{ML} and Wilde \emph{et al.}~\cite{WWY}. Both the initial and the new definitions have found operational significance in quantum hypothesis testing (see~\cite{MO} and references therein). They are related by duality relations for conditional R\'enyi entropies~\cite{tomamichel13}.
 The new quantum generalization has also lead to significant progress in determining the strong converse property of certain quantum channels for classical~\cite{WWY}, quantum~\cite{tomamichelww14}, and entanglement-assisted~\cite{gupta13} information transmission.

The non-uniqueness of the definition is a direct result of the non-commutati\-vity of general quantum states and it is possible to devise various different generalizations of the classical R\'enyi divergence beyond the ones mentioned above.
Most recently, a two-parameter family for the quantum R\'{e}nyi divergence, namely the $\alpha$-$z$ relative R\'{e}nyi entropies, has been proposed by Audenaert and Datta \cite{AD}, in the hope that it could shed some light on the original problem of the definition of the quantum R\'{e}nyi divergence of order $\alpha$. 
In this work we investigate some properties of the general $\alpha$-$z$ entropies, in particular we establish that they are continuous and differentiable at $\alpha = 1$ under some weak conditions. Furthermore, we compute the derivative at $\alpha = 1$ and find that it corresponds to the quantum relative information variance that has recently found operational significance in second-order quantum hypothesis testing~\cite{TH13,LI13} and quantum channel coding~\cite{tomamicheltan14}.

{The remainder of this paper is organized as follows.
Section~\ref{sec:pre} introduces the relevant notation and definitions of the R\'enyi divergences used in this work. Section~\ref{sec:results} then provides an overview and discussion of the main technical results. Their proofs are provided in Section~\ref{sec:proofs}.}

\section{Notations and Preliminaries}
\label{sec:pre}

\subsection{Notation}

Throughout this paper, let $\mathcal{H}$ denote a finite-dimensional Hilbert space, which is a finite-dimensional vector space over the field of complex numbers, endowed with a complex inner product. Let $\mathcal{L}(\mathcal{H})$, $\mathcal{T}(\mathcal{H})$, $\mathcal{P}(\mathcal{H})$, and $\mathcal{P}^+(\mathcal{H})$ denote the set of linear, Hermitian, positive semi-definite, and positive definite operators on $\mathcal{H}$, respectively. Finally, we use $\mathcal{D}(\mathcal{H}) := \{ \rho \in \mathcal{P}(\mathcal{H}) : \Tr(\rho) = 1 \}$  to denote the set of all density operators on $\mathcal{H}$.

We note that a linear operator $\rho\in\mathcal{L}(\mathcal{H})$ is Hermitian if and only if it is unitarily diagonalizable, and its eigenvalues are real. A Hermitian operator $\rho\in\mathcal{T}(\mathcal{H})$ is positive semi-definite, denoted $\rho \geq 0$, (respectively positive definite, denoted $\rho > 0$) if and only if its eigenvalues are non-negative (respectively positive).  
For all $\rho, \sigma\in\mathcal{P}(\mathcal{H})$, we write $\rho\geq\sigma$ or $\sigma\leq\rho$ if and only if $\rho-\sigma\geq0$. 

For a given $\rho\in\mathcal{L}(\mathcal{H})$, we denote the kernel and support of $\rho$ by $\ker\rho$ and $\supp \rho$, respectively. For a given $\rho\in\mathcal{T}(\mathcal{H})$ and $v\in\mathcal{H}$, we have $v\in\ker\rho$ if and only if $v$ is a eigenvector of $\rho$ associated to the zero eigenvalue. Moreover, $\mathcal{H}=\ker\rho\oplus\supp\rho$.
For any $\sigma,\tau\in\mathcal{T}(\mathcal{H})$, we use the notation $\sigma\gg\tau$ to indicate the fact that $\ker\sigma\subseteq\ker\tau$ (or equivalently, $\supp\tau\subseteq\supp\sigma$), and we use the notation $\sigma\perp\tau$ to indicate that $\sigma$ and $\tau$ are orthogonal to each other. Also, for any $\rho\in\mathcal{T}(\mathcal{H})$, we denote the projection operator of $\rho$ onto $\supp\rho$ by $\Pi_{\rho}$. Finally, we  denote the Schatten $p$-norm on $\mathcal{L}(\mathcal{H})$, where $p\in[1,\infty]$, by $\|\cdot\|_{p}$. 

For positive semi-definite operators $\rho \in \mathcal{P}(\mathcal{H})$, we write $\rho^{-1}$ for the generalized inverse and $\log \rho$ for the logarithm evaluated on the support of $\rho$.

\subsection{Classical Divergence}

First, let us recall the definition of the KL divergence, which is a non-symmetric measure of the difference between probability distributions~\cite{KL}. Let $X$ and $Y$ be random variables with alphabet $\{a_1, a_2, \cdots, a_n\}$, and let $p(a_i)$ and $q(a_i)$ denote the probabilities that the outcome $a_i$ occurs in $X$ and $Y$ respectively.
The {KL divergence} of $Y$ from $X$ is defined as
$$D(X\|Y) :=\sum_{i=1}^n p(a_i)\log \frac{p(a_i)}{q(a_i)}.$$

In 1961, R\'{e}nyi produced a set of axioms that uniquely characterize the KL divergence. By relaxing one of the axioms, R\'{e}nyi further characterized a one-parameter family of divergences, generalizing the KL divergence. 
The {R\'{e}nyi divergence of order $\alpha$ of $Y$ from $X$}, where $\alpha\in(0,1)\cup(1,\infty)$, is defined as~\cite{Renyi}
$$D_{\alpha}(X\| Y) :=\frac{1}{\alpha-1}\log\left(\sum_{i=1}^n p(x_i)^{\alpha}q(x_i)^{1-\alpha}\right).$$
It is well known that $\lim_{\alpha\to1} D_{\alpha}(X\| Y)=D(X\| Y)$.

\subsection{Quantum Divergences}

The R\'{e}nyi divergence has a natural extension to the quantum setting.

\begin{definition}\label{4} Let $\rho \in \mathcal{D}(\mathcal{H})$, $\sigma\in\mathcal{P}(\mathcal{H})$. The \textbf{quantum R\'{e}nyi divergence of order $\alpha$}, where $\alpha\in(0,1)\cup(1,\infty)$, of $\sigma$ from $\rho$ is defined as
\begin{eqnarray*}
D_{\alpha}(\rho\|\sigma) :=  \left\lbrace
\begin{array}{l l}
\frac{1}{\alpha-1}\log {\Tr(\rho^{\alpha}\sigma^{1-\alpha})} & \quad \text{if } \sigma\gg\rho, \text{ or } \alpha <1 \text{ and } \rho\not\perp\sigma,\\
\infty & \quad \text{otherwise}
\end{array}
\right.
\end{eqnarray*}
\end{definition}
As we will see, in the limit $\alpha \to 1$ this definition yields the quantum relative entropy,
$D(\rho\|\sigma) := \Tr(\rho(\log\rho-\log\sigma))$ if $\sigma \gg \rho$.

M\"{u}ller-Lennert \emph{et al.} \cite{ML} and independently, Wilde \emph{et al.} \cite{WWY}, proposed a new definition for the quantum R\'{e}nyi divergence of order $\alpha$:\footnote{In the latter paper, the divergence is called ``sandwiched R\'enyi relative entropy''.}

\begin{definition}\label{6} Let $\rho \in \mathcal{D}(\mathcal{H})$, $\sigma\in\mathcal{P}(\mathcal{H})$. The \textbf{sandwiched quantum R\'{e}nyi divergence of order $\alpha$}, where $\alpha\in(0,1)\cup(1,\infty)$, of $\sigma$ from $\rho$ is defined as% follows: 
\begin{eqnarray*}
\D_{\alpha}(\rho\|\sigma) := \left\lbrace
\begin{array}{l l}
\frac{1}{\alpha-1}\log\Tr\left(\left(\sigma^{\frac{1-\alpha}{2\alpha}}\rho\sigma^{\frac{1-\alpha}{2\alpha}}\right)^{\alpha}\right) & \quad \text{if } \sigma\gg\rho, \text{ or } \alpha <1 \text{ and } \rho\not\perp\sigma\\
\infty & \quad \text{otherwise}
\end{array}
\right.
\end{eqnarray*}
\end{definition}

Here, we note that when $\rho$ and $\sigma$ commute, the two definitions for the quantum R\'{e}nyi divergence of order $\alpha$, namely $D_{\alpha}(\rho\|\sigma)$ and $\D_{\alpha}(\rho\|\sigma)$, coincide. 
 
Note that both proposed definitions for the quantum R\'{e}nyi divergence of order $\alpha$ have found operational significances, and thus it is not sufficient to consider just a single quantum generalization of the R\'enyi divergence. Moreover, none of the two generalizations possesses all the desired limits. For instance, the relative max-, and collision entropies are shown to be not a specialization of $D_{\alpha}(\rho\|\sigma)$ for any value of $\alpha$ \cite{CRT,Tomamichel}, while the relative min entropy is shown not to be a specialization of $\D_{\alpha}(\rho\|\sigma)$ for any value of $\alpha$ \cite{DL}.  

Moreover, Mosonyi and Ogawa proposed a piecewise definition for the quantum R\'{e}nyi divergence of order $\alpha$, based on its operational interpretation in quantum hypothesis testing~\cite{MO}:
\begin{eqnarray*}
\bar{D}_{\alpha}(\rho\|\sigma) := \left\lbrace
\begin{array}{l l}
D_{\alpha}(\rho\|\sigma) & \quad \text{if }\alpha<1,\\
\D_{\alpha}(\rho\|\sigma) & \quad \text{if }\alpha>1
\end{array}
\right.
\end{eqnarray*} 

The definition $\bar{D}_{\alpha}(\rho\|\sigma)$ for the quantum R\'{e}nyi divergence of order $\alpha$ satisfies the data processing inequality for all admissible values of $\alpha$ \cite{ML,Beigi,FL}. Furthermore, this definition for the quantum R\'{e}nyi divergence of order $\alpha$ satisfies the axioms proposed by M\"uller-Lennert \emph{et al.} for the quantum R\'{e}nyi divergence of order $\alpha$ for all $\alpha\in\left(0,1\right)\cup(1,\infty)$.

\subsection{$\alpha$-$z$ relative R\'{e}nyi entropies}

More recently, Audenaert and Datta proposed a two-parameter family of $\alpha$-$z$ relative R\'{e}nyi entropies \cite{AD}, and along with the definition, proved some limiting properties of the $\alpha$-$z$ relative R\'{e}nyi entropies. The aim of defining a two-parameter family of $\alpha$-$z$ relative R\'{e}nyi entropies is to unite the different proposed definitions for the R\'{e}nyi divergences.

\begin{definition}\label{10}
Let $\rho\in \mathcal{D}(\mathcal{H})$, $\sigma\in \mathcal{P}(\mathcal{H})$ with $\sigma\gg\rho$. Then for all $\alpha\in\mathbb{R}\setminus\{1\}$ and $z\in\mathbb{R}\setminus\{0\}$, the \textbf{$\alpha$-$z$ relative R\'{e}nyi entropy} of $\sigma$ from $\rho$ is defined as
$$D_{\alpha,z}(\rho\|\sigma):=\frac{1}{\alpha-1}\log\Tr\left(\left(\sigma^{\frac{1-\alpha}{2z}}\rho^{\frac{\alpha}{z}}\sigma^{\frac{1-\alpha}{2z}}\right)^z\right).$$ 
\end{definition}

Hereafter, we will refer the $\alpha$-$z$ relative R\'{e}nyi entropies as $\alpha$-$z$ divergences for convenience.
For $z=1$ and $z=\alpha$, we see that the quantities $\D_{\alpha}(\rho\|\sigma)$ and $D_{\alpha}(\rho\|\sigma)$ are specializations of the $\alpha$-$z$ divergences.
\begin{align*}
  D_{\alpha,1}(\rho\|\sigma) &= D_{\alpha}(\rho\|\sigma) = \frac{1}{\alpha-1}\log\Tr\left(\sigma^{1-\alpha}\rho^{\alpha}\right), \quad\text{ and} \\
  D_{\alpha,\alpha}(\rho\|\sigma) &= \D_{\alpha}(\rho\|\sigma)=\frac{1}{\alpha-1}\log\Tr\left(\left(\sigma^{\frac{1-\alpha}{2\alpha}}\rho\sigma^{\frac{1-\alpha}{2\alpha}}\right)^{\alpha}\right).
\end{align*}

\section{Main Results and Discussion}
\label{sec:results}

{
Here we present our main results. Their proofs, together with various auxiliary results, are presented in the following section.
Our first result establishes that the $\alpha$-$z$ divergence is a generalization of the relative entropy in the following strong sense:
}

{
\begin{theorem}\label{31}
Let $\rho\in \mathcal{D}(\mathcal{H})$ and $\sigma\in \mathcal{P}(\mathcal{H})$ with $\sigma\gg\rho$. Suppose $J$ is an open interval containing $1$, and $g:J\to\mathbb{R}$ is a continuously differentiable function with $g(1)\neq0$. Then, $$\lim_{\alpha\to1}D_{\alpha,g(\alpha)}(\rho\|\sigma)=D(\rho\|\sigma)=\Tr(\rho(\log\rho-\log\sigma)).$$
\end{theorem}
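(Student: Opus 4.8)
The plan is to recognise the limit as a $0/0$ indeterminate form and evaluate it by differentiation. Writing $Q_{\alpha,z}(\rho\|\sigma) := \Tr\big((\sigma^{\frac{1-\alpha}{2z}}\rho^{\frac{\alpha}{z}}\sigma^{\frac{1-\alpha}{2z}})^{z}\big)$, so that $D_{\alpha,g(\alpha)}(\rho\|\sigma) = (\alpha-1)^{-1}\log Q_{\alpha,g(\alpha)}(\rho\|\sigma)$, I first observe that at $\alpha = 1$ the inner exponent $\frac{1-\alpha}{2z}$ vanishes while $\frac{\alpha}{z} \to 1/z_0$, where $z_0 := g(1) \neq 0$; since $\sigma \gg \rho$ gives $\Pi_{\sigma}\rho = \rho$, the bracket reduces to $\rho^{1/z_0}$ and hence $Q_{1,z_0}(\rho\|\sigma) = \Tr(\rho) = 1$. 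Thus both $\log Q_{\alpha,g(\alpha)} \to 0$ and $\alpha - 1 \to 0$, and after establishing that $\alpha\mapsto Q_{\alpha,g(\alpha)}$ is continuously differentiable near $\alpha=1$ (this is where the continuous differentiability of $g$ and the positive-definiteness of $\sigma$ on $\supp\sigma$ enter, with the mild care that the powers of $\rho$ remain continuous across $\ker\rho$ because $x^{b}\log x \to 0$ as $x\to 0^{+}$ for $b>0$), L'H\^opital's rule reduces the problem to computing the single derivative $\frac{d}{d\alpha}Q_{\alpha,g(\alpha)}(\rho\|\sigma)\big|_{\alpha=1}$.

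To differentiate, I would write the matrix power as an exponential, $X(\alpha)^{g(\alpha)} = \exp\big(g(\alpha)\log X(\alpha)\big)$ with $X(\alpha) := \sigma^{\frac{1-\alpha}{2g(\alpha)}}\rho^{\frac{\alpha}{g(\alpha)}}\sigma^{\frac{1-\alpha}{2g(\alpha)}}$, and exploit the key identity $\frac{d}{d\alpha}\Tr\,e^{M(\alpha)} = \Tr\big(e^{M(\alpha)}M'(\alpha)\big)$, valid for differentiable Hermitian-matrix-valued $M$ by the Duhamel formula together with cyclicity of the trace (the non-commuting factors cancel under $\Tr$). With $M(\alpha) = g(\alpha)\log X(\alpha)$ this splits the derivative into a term proportional to $g'(\alpha)\log X(\alpha)$ and a term proportional to $g(\alpha)\,\frac{d}{d\alpha}\log X(\alpha)$, both traced against $X(\alpha)^{g(\alpha)}$, which equals $\rho$ at $\alpha = 1$. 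The first piece is immediate: it contributes $g'(1)\,\Tr\big(\rho\,\tfrac{1}{z_0}\log\rho\big) = \tfrac{g'(1)}{z_0}\Tr(\rho\log\rho)$.

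The hard part is the non-commutativity lurking in $\frac{d}{d\alpha}\log X(\alpha)$, since the three factors $\sigma^{a},\rho^{b},\sigma^{a}$ do not commute and the derivative of a matrix logarithm is an operator integral rather than $X^{-1}X'$. The saving observation is that at $\alpha = 1$ the base $X(1) = \rho^{1/z_0}$ is a function of $\rho$, hence commutes with $\rho = X(1)^{z_0}$; feeding this into the representation $\frac{d}{d\alpha}\log X = \int_0^{\infty}(X+s)^{-1}X'(X+s)^{-1}\,ds$ and using $\int_0^{\infty}(X+s)^{-2}\,ds = X^{-1}$ on $\supp\rho$ collapses $\Tr\big(\rho\,\frac{d}{d\alpha}\log X\big)$ to the manageable $\Tr\big(\rho\,X(1)^{-1}X'(1)\big)$. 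It then remains to compute $X'(1)$ by the product rule, using $\frac{d}{d\alpha}\sigma^{a(\alpha)} = a'(\alpha)\sigma^{a(\alpha)}\log\sigma$ and the analogue for $\rho$, together with $a(1)=0$, $a'(1) = -\tfrac{1}{2z_0}$, and $b'(1) = \tfrac{1}{z_0} - \tfrac{g'(1)}{z_0^2}$, and to assemble the two contributions. I expect the pleasant final step to be that the two $g'(1)$-dependent terms cancel exactly, leaving $\frac{d}{d\alpha}Q_{\alpha,g(\alpha)}(\rho\|\sigma)\big|_{\alpha=1} = \Tr(\rho\log\rho) - \Tr(\rho\log\sigma) = D(\rho\|\sigma)$, which simultaneously yields the value of the limit and explains why it is independent of the arbitrary function $g$.
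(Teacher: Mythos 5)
Your proposal is correct, and I checked the assembly you left as ``expected'': with $a'(1)=-\frac{1}{2z_0}$ and $b'(1)=\frac{1}{z_0}-\frac{g'(1)}{z_0^2}$, your second piece evaluates to $z_0\Tr\big(\rho X(1)^{-1}X'(1)\big)=\Tr(\rho\log\rho)-\Tr(\rho\log\sigma)-\frac{g'(1)}{z_0}\Tr(\rho\log\rho)$, whose $g'(1)$-term indeed cancels your first piece exactly, leaving $D(\rho\|\sigma)$. Your skeleton --- l'H\^opital reducing the limit to a single derivative of the trace functional at $\alpha=1$, where the trace equals $1$ --- coincides with the paper's, but your execution of that derivative takes a genuinely different route. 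The paper keeps $(\alpha,z)$ as independent variables, sets $H(\alpha)=F(\alpha,g(\alpha))^{g(\alpha)}$, and applies the two-variable chain rule, so that all $g$-dependence is isolated in the single term $g'(\alpha)\,\frac{\partial}{\partial z}\Tr\left(F(\alpha,z)^z\right)$, which Proposition~\ref{30} shows vanishes as $\alpha\to1$ (at $\alpha=1$ one has $\Tr\big((\rho^{1/z})^z\big)=\Tr(\rho)=1$ identically in $z$); the required derivative formulas are imported wholesale from Lemma~\ref{29} (cited to Mosonyi--Ogawa, proof omitted), and the remaining $\partial_\alpha$ term is precisely the fixed-$z$ computation of Proposition~\ref{20}. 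You instead differentiate directly along the curve via $X(\alpha)^{g(\alpha)}=\exp\big(g(\alpha)\log X(\alpha)\big)$, the Duhamel formula with cyclicity of the trace, the integral representation of the derivative of the matrix logarithm, and the commutation $[\rho,X(1)]=0$; your explicit $g'(1)$-cancellation is the paper's vanishing $z$-partial in different bookkeeping. What the paper's split buys is modularity and a transparent one-line explanation of why the limit is $g$-independent; what yours buys is self-containedness, since you re-derive the matrix calculus rather than invoking it, at the cost of heavier Fr\'echet-derivative machinery and the support/kernel care you rightly flag --- on which point you are no less rigorous than the paper itself, whose Lemma~\ref{29} is stated for positive definite $F$ yet applied to the possibly singular $F(\alpha)$.
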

}
Henceforth, we define $D_{1,z}(\rho\|\sigma):=\lim_{\alpha\to1} D_{\alpha,z}(\rho\|\sigma)=D(\rho\|\sigma)$ for all $z\neq0$.
In particular, our proof generalizes the arguments in~\cite{ML,WWY} which established that $\lim_{\alpha\to1} \D_{\alpha}(\rho\|\sigma) = D(\rho\|\sigma)$. 
We also remark that we have $$\lim_{\alpha\to1}\lim_{z\to\infty} D_{\alpha,z}(\rho\|\sigma)=D(\rho\|\sigma)=\Tr(\rho(\log\rho-\log\sigma))$$ in general, where this limit has been proved by Audenaert and Datta \cite{AD}. 
Finally, we note that the condition $g(1)\neq0$ is crucial in proving that $$\lim_{\alpha\to1}D_{\alpha,g(\alpha)}(\rho\|\sigma)=D(\rho\|\sigma)=\Tr(\rho(\log\rho-\log\sigma)).$$ Indeed, the limit is not reproduced in general when $g(1)=0$ (see \cite[Theorem 2]{AD} for further details).

Our second main result establishes both $\alpha \mapsto D_{\alpha}(\rho\|\sigma)$ and $\alpha \mapsto \D_{\alpha}(\rho\|\sigma)$ are continuously differentiable at $\alpha = 1$. Moreover, their derivatives agree and are proportional to the relative entropy variance.

\begin{theorem}\label{newthm}
Let $\rho\in \mathcal{D}(\mathcal{H})$ and $\sigma\in \mathcal{P}(\mathcal{H})$ with $\sigma\gg\rho$. Then $D_{\alpha,1}(\rho\|\sigma)$ and $D_{\alpha,\alpha}(\rho\|\sigma)$ are differentiable at $\alpha=1$, and we have 
\begin{align*}
  &\frac{\rm d}{{\rm d}\alpha}D_{\alpha,1}(\rho\|\sigma)\, \bigg|_{\alpha=1} = \frac{\rm d}{{\rm d}\alpha}D_{\alpha,\alpha}(\rho\|\sigma)\, \bigg|_{\alpha=1} = \frac12 V(\rho\|\sigma), \quad \textrm{where}\\
  &\qquad V(\rho\|\sigma) := \Tr\left(\rho(\log\rho-\log\sigma)^2\right)-(\Tr\left(\rho(\log\rho-\log\sigma)\right))^2
\end{align*}
is the relative entropy variance~\cite{TH13,LI13}.
\end{theorem}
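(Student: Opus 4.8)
The plan is to treat both divergences with a single reduction. For $z\in\{1,\alpha\}$ write $D_{\alpha,z}(\rho\|\sigma)=\tfrac{1}{\alpha-1}\log\Phi(\alpha)$, where $\Phi(\alpha)=\Tr(\sigma^{1-\alpha}\rho^{\alpha})$ when $z=1$ and $\Phi(\alpha)=\Tr((\sigma^{(1-\alpha)/2\alpha}\rho\,\sigma^{(1-\alpha)/2\alpha})^{\alpha})$ when $z=\alpha$. Restricting all operators to $\supp\rho$ (on which $\rho>0$, and $\sigma$ is invertible since $\sigma\gg\rho$), each $\Phi$ is a trace of real-analytic matrix-valued functions whose nonzero eigenvalues stay bounded away from $0$ near $\alpha=1$; hence $\Phi$ is $C^2$ there with $\Phi(1)=\Tr\rho=1$. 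Setting $f:=\log\Phi$ we get $f(1)=0$, so for $\alpha\neq1$ the fundamental theorem of calculus gives $D_{\alpha,z}=\tfrac{f(\alpha)}{\alpha-1}=\int_0^1 f'(1+u(\alpha-1))\,\mathrm du$. This representation is $C^1$ up to and including $\alpha=1$, which establishes differentiability, and differentiating under the integral yields
\[
\frac{\rm d}{{\rm d}\alpha}D_{\alpha,z}(\rho\|\sigma)\Big|_{\alpha=1}=\int_0^1 u\,f''(1)\,\mathrm du=\tfrac12 f''(1)=\tfrac12\big(\Phi''(1)-\Phi'(1)^2\big),
\]
using $\Phi(1)=1$. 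By Theorem~\ref{31} (or a direct first-order computation) $\Phi'(1)=D(\rho\|\sigma)$, so the entire statement reduces to proving the single identity $\Phi''(1)=\Tr(\rho(\log\rho-\log\sigma)^2)$ in each case; this then gives $\tfrac12 f''(1)=\tfrac12\big(\Tr(\rho(\log\rho-\log\sigma)^2)-D(\rho\|\sigma)^2\big)=\tfrac12 V(\rho\|\sigma)$.

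Write $A:=\log\rho$ and $B:=\log\sigma$. For $z=1$ I would differentiate $\Phi(\alpha)=\Tr(e^{(1-\alpha)B}e^{\alpha A})$ twice, which is routine since $A$ commutes with $e^{\alpha A}$ and $B$ with $e^{(1-\alpha)B}$; evaluating at $\alpha=1$ gives $\Phi''(1)=\Tr(\rho B^2)-2\Tr(\rho BA)+\Tr(\rho A^2)$. Because $[\rho,A]=0$ one checks $\Tr(\rho AB)=\Tr(\rho BA)$, whence $\Phi''(1)=\Tr(\rho(A-B)^2)$, as desired.

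The main work is the sandwiched case $z=\alpha$, where base and exponent both depend on $\alpha$. I would decouple them by introducing $\Psi(\alpha,\beta):=\Tr(M(\alpha)^{\beta})$ with $M(\alpha):=\sigma^{s}\rho\,\sigma^{s}$ and $s=\tfrac{1-\alpha}{2\alpha}$, so that $\Phi(\alpha)=\Psi(\alpha,\alpha)$ and, by the chain rule, $\Phi''(1)=\partial_\alpha^2\Psi+2\,\partial_\alpha\partial_\beta\Psi+\partial_\beta^2\Psi$ at $(1,1)$. Since $M(1)=\rho$, the pure $\beta$-derivative is immediate, $\partial_\beta^2\Psi|_{(1,1)}=\Tr(\rho A^2)$. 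For the $\alpha$-derivatives I would use $s'(1)=-\tfrac12$, $s''(1)=1$ to obtain $M'(1)=-\tfrac12(B\rho+\rho B)$ and $\Tr(M''(1))=2\Tr(\rho B)+\Tr(\rho B^2)$, together with the standard trace formula $\partial_\alpha\Tr(h(M))=\Tr(h'(M)M')$. The delicate point—and the step I expect to be the main obstacle—is $\partial_\alpha^2\Psi$ at $\beta=1$: the naive operator approach requires $\frac{\mathrm d}{\mathrm d\alpha}M^{\beta-1}$, which at $\beta=1$ is the derivative of the support projection $M^0=\Pi_{M(\alpha)}$, nonzero whenever $\supp\rho\subsetneq\supp\sigma$ (the support of $\sigma^s\rho\sigma^s$ rotates with $s$) and awkward to manipulate directly. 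Passing to the eigenbasis of $M(\alpha)$, whose positive eigenvalues $\lambda_i(\alpha)$ vary smoothly, resolves this: one finds $\partial_\alpha^2\Psi=\sum_i\big[\beta(\beta-1)\lambda_i^{\beta-2}(\lambda_i')^2+\beta\lambda_i^{\beta-1}\lambda_i''\big]$, and the prefactor $\beta(\beta-1)$ annihilates the support-sensitive term exactly at $\beta=1$, leaving $\partial_\alpha^2\Psi|_{(1,1)}=\Tr(M''(1))=2\Tr(\rho B)+\Tr(\rho B^2)$. A parallel eigenbasis computation gives $\partial_\alpha\partial_\beta\Psi|_{(1,1)}=-\Tr(\rho BA)-\Tr(\rho B)$. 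Summing the three contributions, the terms linear in $B$ cancel and I am left with $\Phi''(1)=\Tr(\rho B^2)-2\Tr(\rho BA)+\Tr(\rho A^2)=\Tr(\rho(A-B)^2)$, identical to the $z=1$ value. This common value of $\Phi''(1)$ completes the proof, yielding derivative $\tfrac12 V(\rho\|\sigma)$ in both cases.
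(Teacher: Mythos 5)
Your proof is correct, and for the sandwiched half it takes a genuinely different route from the paper. For $D_{\alpha,1}$ the content is the same: both you and the paper reduce to $\tfrac12\big(f''(1)-f'(1)^2\big)$ with $f=\log\Tr(\rho^\alpha\sigma^{1-\alpha})$ and compute $f'(1),f''(1)$ identically; the only difference is that the paper establishes differentiability at $\alpha=1$ by a double application of l'H\^opital's rule, whereas your integral representation $D_{\alpha,z}=\int_0^1 f'(1+u(\alpha-1))\,\mathrm{d}u$ is cleaner and immediately yields that the extension is $C^1$. For $D_{\alpha,\alpha}$, however, the paper never computes a second derivative of the sandwiched trace at all: its Proposition~\ref{34} applies the chain rule along the diagonal $z=\alpha$ in the two-parameter family, $\frac{\mathrm{d}}{\mathrm{d}\alpha}D_{\alpha,\alpha}=\partial_\alpha D_{\alpha,z}+\partial_z D_{\alpha,z}$, identifying the first term with Proposition~\ref{33} and killing the second via Proposition~\ref{30} (the vanishing of the $z$-derivative at $\alpha=1$, reflecting that $D_{1,z}=D(\rho\|\sigma)$ is constant in $z$). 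You instead attack $\Phi''(1)$ head-on through $\Psi(\alpha,\beta)=\Tr\big(M(\alpha)^\beta\big)$ with analytic eigenvalue branches, and the cancellation of the terms linear in $B=\log\sigma$ in your final sum is exactly the same mechanism as the paper's $\partial_z D_{1,z}=0$, just realized at the level of the trace functional. The trade-off: the paper's argument is much shorter but rather terse --- it tacitly assumes joint differentiability of $(\alpha,z)\mapsto D_{\alpha,z}$ near $(1,1)$, where the function at $\alpha=1$ is only defined as a limit --- while your computation is heavier but self-contained, and it explicitly confronts and resolves (via constant rank of $M(\alpha)$ and the vanishing prefactor $\beta(\beta-1)$ at $\beta=1$) the derivative-of-the-support-projection issue that the paper's decomposition silently avoids. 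One slip to fix: you should restrict to $\supp\sigma$, not $\supp\rho$ --- the subspace $\supp\rho$ is not $\sigma$-invariant, so ``$\sigma$ invertible on $\supp\rho$'' is not meaningful; on $\supp\sigma$ one has $\sigma>0$ while $\rho$ may remain singular there, which is harmless since the rest of your argument (supported logarithms, positive eigenvalue branches of $M(\alpha)$ bounded away from zero) correctly accommodates a singular $\rho$.
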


The second result thus in particular establishes that the piecewise definition of Mosonyi and Ogawa, $\bar{D}_\alpha(\rho\|\sigma)$, is continuously differentiable in $\alpha$. 
An interesting question to ask is if the function $\alpha\mapsto\bar{D}_{\alpha}(\rho\|\sigma)$ is in fact smooth. However, the answer to this question is negative in general. 
In particular, the second derivative of $\bar{D}_{\alpha}(\rho\|\sigma)$ need not exist at $\alpha=1$ in general. 

Let us consider the following example.
\begin{example}
Let $\rho=\frac12\tiny\begin{pmatrix}1 & 1\\ 1 & 1\end{pmatrix}$ and $\sigma=\tiny\begin{pmatrix}p & 0\\ 0 & 1-p\end{pmatrix}$, where $p\in(0,1)\setminus\left\{\frac{1}{2}\right\}$. Then,
$$D_{\alpha,z}(\rho\|\sigma)=\frac{z}{\alpha-1}\log\frac{p^{\frac{1-\alpha}{z}}+(1-p)^{\frac{1-\alpha}{z}}}{2}.$$
From here, we verify via l'H\^{o}pital's rule that 
\begin{eqnarray*}
\frac{\rm{d}^2}{\rm{d}\alpha^2}D_{\alpha,1}(\rho\|\sigma)\bigg|_{\alpha=1} &=& 0, \qquad \text{ and}\\
\frac{\rm{d}^2}{\rm{d}\alpha^2}D_{\alpha,\alpha}(\rho\|\sigma)\bigg|_{\alpha=1} &=& -\frac{1}{4}(\log p-\log(1-p))^2.
\end{eqnarray*}
As the two quantities are different from each other, this shows that the second derivative of $\bar{D}_{\alpha}(\rho\|\sigma)$ does not exist at $\alpha=1$.
\end{example}

\section{Proofs}
\label{sec:proofs}

\subsection{Partial Derivative of the $\alpha$-$z$ Divergence with Respect to $z$}

It is generally well known that the functions $\alpha\mapsto D_{\alpha}(\rho\|\sigma)$ and $\alpha\mapsto \D_{\alpha}(\rho\|\sigma)$ are monotonically increasing \cite{Tomamichel, ML}. A natural question to ask then is whether if the $\alpha$-$z$ divergences are monotone with respect to $\alpha$ or $z$. The following follows by a simple application of the Araki-Lieb-Thirring trace inequality~\cite{LT,Araki}.

\begin{proposition}\label{13}
Let $\rho\in \mathcal{D}(\mathcal{H})$, $\sigma\in \mathcal{P}(\mathcal{H})$ with $\sigma\gg\rho$, and let $\alpha\in\mathbb{R}\setminus\{1\}$ and $z > 0$. Then, the function $z\mapsto D_{\alpha,z}(\rho\|\sigma)$ is monotonically decreasing if $\alpha>1$, and monotonically increasing if $\alpha<1$.
\end{proposition}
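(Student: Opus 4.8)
The plan is to reduce the monotonicity in $z$ to a single application of the Araki-Lieb-Thirring trace inequality, and then to track how the sign of the prefactor $\frac{1}{\alpha-1}$ converts the monotonicity of the underlying trace quantity into the claimed monotonicity of $D_{\alpha,z}$. To this end, write
$$Q_{\alpha,z}(\rho\|\sigma) := \Tr\Bigl(\bigl(\sigma^{\frac{1-\alpha}{2z}}\rho^{\frac{\alpha}{z}}\sigma^{\frac{1-\alpha}{2z}}\bigr)^{z}\Bigr),$$
so that $D_{\alpha,z}(\rho\|\sigma) = \frac{1}{\alpha-1}\log Q_{\alpha,z}(\rho\|\sigma)$. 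It suffices to prove that $z\mapsto Q_{\alpha,z}(\rho\|\sigma)$ is monotonically decreasing on $(0,\infty)$: once this is established, applying the map $x\mapsto\frac{1}{\alpha-1}\log x$ preserves the direction of monotonicity when $\alpha>1$ and reverses it when $\alpha<1$, which is precisely the dichotomy in the statement.

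To prove the monotonicity of $Q_{\alpha,z}$, fix $0<z_1\le z_2$ and set $r := z_2/z_1 \ge 1$. Introduce the positive semi-definite operators $A := \rho^{\alpha/z_2}$ and $B := \sigma^{(1-\alpha)/z_2}$, where the powers are understood via the generalized inverse on the supports; this is legitimate because $\sigma\gg\rho$ makes the supports compatible, and each such power is positive semi-definite regardless of the sign of its exponent. With this notation one has $Q_{\alpha,z_2}(\rho\|\sigma) = \Tr\bigl((B^{1/2}AB^{1/2})^{z_2}\bigr)$, while the identities $A^{r} = \rho^{\alpha/z_1}$, $B^{r/2} = \sigma^{(1-\alpha)/(2z_1)}$ and $z_2/r = z_1$ let us rewrite $Q_{\alpha,z_1}(\rho\|\sigma) = \Tr\bigl((B^{r/2}A^{r}B^{r/2})^{z_1}\bigr)$.

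Now I would invoke the Araki-Lieb-Thirring inequality in the form that, for $A,B\ge 0$, $r\ge 1$ and $t>0$,
$$\Tr\bigl((B^{1/2}AB^{1/2})^{rt}\bigr)\;\le\;\Tr\bigl((B^{r/2}A^{r}B^{r/2})^{t}\bigr).$$
Taking $t = z_1$, so that $rt = z_2$, yields exactly $Q_{\alpha,z_2}(\rho\|\sigma)\le Q_{\alpha,z_1}(\rho\|\sigma)$, the desired monotone decrease. Combining this with the sign analysis of the first paragraph then completes the argument.

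I expect the only genuinely delicate point to be the bookkeeping of the operator powers and supports: one must verify that $\rho^{\alpha/z}$ and $\sigma^{(1-\alpha)/z}$ are well-defined positive semi-definite operators even for negative exponents, that the rewriting identities $A^{r}=\rho^{\alpha/z_1}$ and $B^{r/2}=\sigma^{(1-\alpha)/(2z_1)}$ hold on the relevant support, and that the hypothesis $\sigma\gg\rho$ is exactly what keeps every trace finite and the cancellations valid. The inequality itself is then applied as a black box, and the final sign dichotomy is immediate.
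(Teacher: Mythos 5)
Your proof is correct and is essentially the paper's own argument: the paper offers no details beyond remarking that the proposition ``follows by a simple application of the Araki-Lieb-Thirring trace inequality,'' and your instantiation with $A=\rho^{\alpha/z_2}$, $B=\sigma^{(1-\alpha)/z_2}$, $r=z_2/z_1\ge 1$, $t=z_1$, yielding $Q_{\alpha,z_2}(\rho\|\sigma)\le Q_{\alpha,z_1}(\rho\|\sigma)$, is exactly the intended application. Your sign analysis via the prefactor $\frac{1}{\alpha-1}$ and the support/generalized-inverse bookkeeping are both sound, so nothing is missing.
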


In general, we remark that proving monotonicity properties for the $\alpha$-$z$ divergences with respect to $\alpha$ is much harder than proving monotonicity properties for the $\alpha$-$z$ divergences with respect to $z$. However, numerical results lead us to conjecture that the $\alpha$-$z$ divergence is monotone with respect to $\alpha$ for a fixed $z$. Nevertheless, we shall prove some local monotonicity properties for the $\alpha$-$z$ divergences with respect to $\alpha$ at $\alpha = 1$ in the following subsections.

\subsection{Continuity of $\alpha$-$z$ Divergence as $\alpha\to 1$}

The main purpose of this section is to prove Theorem~\ref{31}. 
We will need the following technical ingredient. 
The proof follows from standard arguments (see, e.g.,~\cite[Lemma III.1]{mosonyi14}). It is omitted here to streamline the presentation.

\begin{lemma}\label{19}\label{29}
Let $J$ be an open interval in $\mathbb{R}$, and let $F:J\to \mathcal{P}^+(\mathcal{H})$ be a continuously differentiable function. Then, for all $z \in \mathbb{R}$, the function $x\mapsto\Tr(F(x)^z)$ is differentiable, and 
$$\frac{\rm d}{{\rm d}x} \Tr(F(x)^z) = \Tr(z(F(x))^{z-1}F'(x)) .$$
Furthermore, the function $x\mapsto \Tr(F(x)^x)$ is continuously differentiable, and 
$$\frac{\rm d}{{\rm d}x} \Tr(F(x)^x)=\Tr(F(x)^x\log F(x)+xF'(x)F(x)^{x-1}).$$
\end{lemma}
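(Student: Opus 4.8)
The plan is to reduce both formulas to a single fact about differentiating trace functionals of a $C^1$ family of positive definite operators, namely that
$$\frac{\mathrm d}{\mathrm dx}\Tr\big(f(F(x))\big)=\Tr\big(f'(F(x))F'(x)\big)$$
for any scalar $f\in C^1$ on an interval containing the spectra of the $F(x)$. Taking $f(t)=t^z$, so that $f'(t)=zt^{z-1}$, then yields the first claim at once. Since differentiability is a pointwise statement, I would fix a compact subinterval $[a,b]\subset J$; because $F$ is continuous and takes values in $\mathcal{P}^+(\mathcal{H})$, the smallest and largest eigenvalues of $F(x)$ are continuous and positive on $[a,b]$, so $\mathrm{spec}(F(x))\subset[m,M]$ for constants $0<m\le M<\infty$. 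This guarantees that $t\mapsto t^z$ and $t\mapsto\log t$ (and their derivatives) are well defined, bounded, and uniformly approximable by polynomials on $[m,M]$, and that every operator appearing below makes sense.

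The core step is the trace-derivative identity, which I would establish first for polynomials and then pass to $t^z$. For $p(t)=t^n$, the Leibniz rule gives
$$\frac{\mathrm d}{\mathrm dx}\Tr(F(x)^n)=\sum_{k=1}^n\Tr\big(F(x)^{k-1}F'(x)F(x)^{n-k}\big)=n\,\Tr\big(F(x)^{n-1}F'(x)\big),$$
where the last equality uses cyclicity of the trace; this is exactly $\Tr(p'(F(x))F'(x))$, and linearity extends it to all polynomials. To reach $f(t)=t^z$ I would choose polynomials $p_j$ with $p_j\to f$ and $p_j'\to f'$ uniformly on $[m,M]$ (possible since $f\in C^1[m,M]$: approximate $f'$ by Weierstrass and integrate). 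Then $\Tr(p_j(F(x)))\to\Tr(f(F(x)))$ and $\Tr(p_j'(F(x))F'(x))\to\Tr(f'(F(x))F'(x))$ uniformly on $[a,b]$, so the standard theorem on term-by-term differentiation of uniformly convergent sequences delivers both the differentiability and the formula, with the derivative automatically continuous. This completes the first part.

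For the second part I would set $G(x,y):=\Tr(F(x)^y)$, so that $\Tr(F(x)^x)=G(x,x)$. The first part gives $\partial_x G(x,y)=\Tr(yF(x)^{y-1}F'(x))$, while for fixed $x$ writing $F(x)^y=\exp(y\log F(x))$ and differentiating in the scalar exponent $y$ gives $\partial_y G(x,y)=\Tr(F(x)^y\log F(x))$, the power and logarithm commuting as functions of the same operator. Both partials are jointly continuous in $(x,y)$, so $G$ is $C^1$; composing with the curve $x\mapsto(x,x)$ and applying the chain rule yields
$$\frac{\mathrm d}{\mathrm dx}\Tr(F(x)^x)=\Tr\big(F(x)^x\log F(x)\big)+\Tr\big(xF(x)^{x-1}F'(x)\big),$$
which, after one more use of cyclicity, is the stated expression, and continuity of the derivative follows as before.

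The hard part is the non-commutativity of $F(x)$ and $F'(x)$: the naive operator identity $\frac{\mathrm d}{\mathrm dx}F(x)^z=zF(x)^{z-1}F'(x)$ is false in general, and the individual eigenvalues and eigenprojections of $F(x)$ need not even be differentiable when eigenvalues cross, so one cannot simply differentiate a spectral decomposition. The resolution is precisely to work inside the trace: cyclicity symmetrizes the Leibniz expansion and makes the noncommutative correction terms collapse, so the scalar formula survives verbatim for integer powers, and the polynomial-approximation step is what transfers it from integer powers to the real power $t^z$ and to $\log t$.
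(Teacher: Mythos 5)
Your proof is correct, but note that the paper does not actually supply a proof of this lemma at all: it is stated with the proof omitted, deferring to ``standard arguments'' via Lemma III.1 of Mosonyi and Ogawa (arXiv:1407.3567). So your argument fills a gap the authors deliberately left, and it takes a somewhat more elementary route than the cited standard one. The usual argument behind the citation establishes Fr\'echet differentiability of $A\mapsto f(A)$ on Hermitian matrices for $C^1$ scalar functions $f$ via the Daleckii--Krein divided-difference formula, and then notes that tracing against $F'(x)$ annihilates the off-diagonal divided differences, leaving $\Tr(f'(F(x))F'(x))$; your approach instead proves the trace identity directly for monomials by Leibniz plus cyclicity and transfers it to $f(t)=t^z$ (and implicitly to any $C^1$ function with spectra confined to $[m,M]$, $m>0$) by Weierstrass approximation of $f'$ followed by integration, invoking term-by-term differentiation of a sequence whose derivatives converge uniformly. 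This buys you a self-contained argument requiring no operator-derivative machinery, at the cost of only yielding the derivative of the \emph{trace} functional rather than of $f(F(x))$ itself --- which is all the lemma asks. The details you supply are the right ones: the uniform spectral bounds $0<m\le M$ on a compact subinterval, the estimate $\lvert\Tr(AB)\rvert\le\lVert A\rVert_\infty\lVert B\rVert_1$ together with boundedness of $\lVert F'(x)\rVert_1$ to get uniform convergence of the derivatives, and your explicit warning that the naive identity $\frac{\mathrm{d}}{\mathrm{d}x}F(x)^z=zF(x)^{z-1}F'(x)$ fails and that eigenprojections need not be differentiable at crossings is exactly the pitfall this lemma is designed to circumvent. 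Your treatment of the second claim via $G(x,y)=\Tr(F(x)^y)$, with $\partial_y G(x,y)=\Tr(F(x)^y\log F(x))$ computed at fixed $x$ (where no perturbation-theoretic issue arises) and the chain rule along the diagonal justified by joint continuity of the partials, matches the standard treatment; the joint continuity itself is routine once one writes $F(x)^{y-1}=\exp((y-1)\log F(x))$, which you could make explicit but which poses no difficulty.
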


As an immediate consequence of the latter statement, we find the following:

\begin{proposition}\label{30}
For all $\rho\in \mathcal{D}(\mathcal{H})$ and $\sigma\in \mathcal{P}(\mathcal{H})$ with $\sigma\gg\rho$, and $z\in\mathbb{R}\setminus\{0\}$, we have $$\lim_{\alpha\to1}\frac{\partial}{\partial z}\Tr\left(\sigma^{\frac{1-\alpha}{2z}}\rho^{\frac{\alpha}{z}}\sigma^{\frac{1-\alpha}{2z}}\right)^z=0.$$
\end{proposition}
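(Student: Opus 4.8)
The plan is to compute the $z$-derivative explicitly via the second identity of Lemma~\ref{29} and then send $\alpha\to1$. Write $F_\alpha(z):=\sigma^{\frac{1-\alpha}{2z}}\rho^{\frac{\alpha}{z}}\sigma^{\frac{1-\alpha}{2z}}$, so that the quantity of interest is $Q(\alpha,z):=\Tr(F_\alpha(z)^z)$. The immediate obstacle is that Lemma~\ref{29} requires a positive \emph{definite} argument, whereas $F_\alpha(z)$ is in general only positive semi-definite and, worse, its support $\sigma^{-\frac{1-\alpha}{2z}}\supp\rho$ drifts with $\alpha$. I would remove this difficulty at the very start by the standard spectral identity: setting $M:=\sigma^{\frac{1-\alpha}{2z}}\rho^{\frac{\alpha}{2z}}$ gives $F_\alpha(z)=MM^{\dagger}$, whose nonzero spectrum coincides with that of $M^{\dagger}M=\rho^{\frac{\alpha}{2z}}\sigma^{\frac{1-\alpha}{z}}\rho^{\frac{\alpha}{2z}}=:\widetilde{F}_\alpha(z)$, so that $Q(\alpha,z)=\Tr(\widetilde{F}_\alpha(z)^z)$. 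The advantage is that $\widetilde{F}_\alpha(z)$ is positive definite on the \emph{fixed} subspace $\supp\rho$ for every $\alpha$ near $1$ and every $z\neq0$: indeed $\rho^{\frac{\alpha}{2z}}$ is invertible on $\supp\rho$ and maps it into $\supp\rho\subseteq\supp\sigma$, where $\sigma^{\frac{1-\alpha}{z}}$ is positive definite.

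Working on $\supp\rho$, I can now apply the second part of Lemma~\ref{29} with the role of the variable played by $z$ (on a small interval $J\ni z$ avoiding $0$) and $\alpha$ held fixed, which yields
\begin{equation*}
\frac{\partial}{\partial z}Q(\alpha,z)=\Tr\!\big(\widetilde{F}_\alpha(z)^z\log\widetilde{F}_\alpha(z)\big)+\Tr\!\big(z\,\widetilde{F}_\alpha(z)^{z-1}\,\partial_z\widetilde{F}_\alpha(z)\big),
\end{equation*}
where $\partial_z\widetilde{F}_\alpha(z)$ is obtained by the product rule together with $\frac{\partial}{\partial z}\rho^{c(z)}=c'(z)\,\rho^{c(z)}\log\rho$. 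At $\alpha=1$ the exponent $\frac{1-\alpha}{z}$ and its $z$-derivative both vanish, so $\sigma^{0}=\Pi_\sigma$ acts as the identity on $\supp\rho$, the $\log\sigma$ contribution drops out, and one finds $\widetilde{F}_1(z)=\rho^{1/z}$, hence $\widetilde{F}_1(z)^z=\rho$ is \emph{independent of $z$}. Moreover every operator in sight is then a function of $\rho$ alone: $\log\widetilde{F}_1(z)=\tfrac1z\log\rho$, $\widetilde{F}_1(z)^{z-1}=\rho^{\frac{z-1}{z}}$, and $\partial_z\widetilde{F}_1(z)=-\tfrac1{z^{2}}\rho^{1/z}\log\rho$.

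Finally I would conclude by continuity. As $\alpha\to1$ we have $\widetilde{F}_\alpha(z)\to\rho^{1/z}$ on $\supp\rho$ with eigenvalues bounded away from $0$, so the functional calculus, and hence both trace terms above, are continuous in $\alpha$ at $\alpha=1$ and the limit may be taken term by term. Since at $\alpha=1$ all factors commute, the first term equals $\tfrac1z\Tr(\rho\log\rho)$ and the second equals $-\tfrac1z\Tr(\rho\log\rho)$, so their sum vanishes; equivalently, the joint continuity of $\partial_z Q$ identifies $\lim_{\alpha\to1}\partial_z Q(\alpha,z)$ with the slice derivative $\partial_z Q(1,z)$, which is $0$ because $Q(1,z)=\Tr(\rho)=1$ is constant in $z$. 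The genuine obstacle is the very first step---replacing $F_\alpha(z)$ by $\widetilde{F}_\alpha(z)$ to secure a positive definite argument on a support that does not drift with $\alpha$---after which both the differentiation via Lemma~\ref{29} and the interchange of limit and derivative follow from the continuity of the functional calculus.
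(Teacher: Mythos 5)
Your proof is correct and follows essentially the same route as the paper's: apply the second identity of Lemma~\ref{29} in the variable $z$ with $\alpha$ fixed, use continuity in $\alpha$ of the resulting expression to pass to the slice $\alpha=1$, where every operator in sight is a function of $\rho$ alone and the two trace terms cancel as $\tfrac1z\Tr(\rho\log\rho)-\tfrac1z\Tr(\rho\log\rho)=0$ (your alternative observation that $Q(1,z)=\Tr(\rho)=1$ is constant in $z$ gives the same conclusion even more directly). Your opening flip from $F_\alpha(z)=MM^{\dagger}$ to $\widetilde{F}_\alpha(z)=M^{\dagger}M=\rho^{\frac{\alpha}{2z}}\sigma^{\frac{1-\alpha}{z}}\rho^{\frac{\alpha}{2z}}$, which pins the relevant support to the fixed subspace $\supp\rho$ so that the positive-definiteness hypothesis of Lemma~\ref{29} is genuinely satisfied when $\rho$ is singular, is a careful refinement of a point the paper's proof passes over silently.
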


\begin{proof}
Note that the function $F: z\mapsto \sigma^{\frac{1-\alpha}{2z}}\rho^{\frac{\alpha}{z}}\sigma^{\frac{1-\alpha}{2z}}$ is continuously differentiable. Indeed, we have
\begin{eqnarray*}
&&F'(z) = -\frac{1-\alpha}{2z^2}\sigma^{\frac{1-\alpha}{2z}}(\log\sigma)\rho^{\frac{\alpha}{z}}\sigma^{\frac{1-\alpha}{2z}} \\
&& \qquad -\frac{\alpha}{z^2}\sigma^{\frac{1-\alpha}{2z}}(\log\rho)\rho^{\frac{\alpha}{z}}\sigma^{\frac{1-\alpha}{2z}}
-\frac{1-\alpha}{2z^2}\sigma^{\frac{1-\alpha}{2z}}\rho^{\frac{\alpha}{z}}\sigma^{\frac{1-\alpha}{2z}}(\log\sigma) ,
\end{eqnarray*}
which is continuously differentiable in $\alpha$.
Hence, at $\alpha=1$, we have 
$F(z) = \rho^{\frac{1}{z}}$ and $F'(z) = -\frac{1}{z^2}\rho^{\frac{1}{z}}\log\rho$.
Thus, Lemma \ref{29} yields  
\begin{eqnarray*}
\frac{\partial}{\partial z}\Tr \left( F(z)^z \right)
&=&\Tr\left(\rho\log\rho^{\frac{1}{z}}-z\left(\frac{1}{z^2}\rho^{\frac{1}{z}}\log\rho\right)\rho^{\frac{z-1}{z}}\right)\\
&=&\Tr\left(\rho\log\rho^{\frac{1}{z}}-\frac{1}{z}\rho\log\rho\right)
=0.
\end{eqnarray*}
\qed
\end{proof}

Now, we are ready to prove the following special case of Theorem~\ref{31}:

\begin{proposition}\label{20}
Let us fix a $\rho\in \mathcal{D}(\mathcal{H})$ and $\sigma\in \mathcal{P}(\mathcal{H})$ with $\sigma\gg\rho$. Then for all $z\neq0$, we have $\lim_{\alpha\to1}D_{\alpha,z}(\rho\|\sigma)=D(\rho\|\sigma)$.
\end{proposition}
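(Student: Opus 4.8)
The plan is to compute $\lim_{\alpha\to1} D_{\alpha,z}(\rho\|\sigma)$ for a fixed $z\neq0$ via l'Hôpital's rule, treating the divergence as a $\frac{0}{0}$ indeterminate form in $\alpha$ and reducing the computation to a derivative at $\alpha=1$ of the trace appearing inside the logarithm.

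First I would introduce the shorthand $Q(\alpha,z):=\Tr\bigl(\bigl(\sigma^{\frac{1-\alpha}{2z}}\rho^{\frac{\alpha}{z}}\sigma^{\frac{1-\alpha}{2z}}\bigr)^z\bigr)$, so that $D_{\alpha,z}(\rho\|\sigma)=\frac{\log Q(\alpha,z)}{\alpha-1}$. At $\alpha=1$ the inner operator reduces to $\rho$ (recall $\sigma\gg\rho$, so the relevant expressions are well defined on $\supp\rho$), hence $Q(1,z)=\Tr(\rho)=1$ and $\log Q(1,z)=0$. Thus both numerator and denominator vanish at $\alpha=1$, and l'Hôpital's rule gives
\begin{equation*}
  \lim_{\alpha\to1} D_{\alpha,z}(\rho\|\sigma)
  = \lim_{\alpha\to1} \frac{1}{Q(\alpha,z)}\,\frac{\partial}{\partial\alpha} Q(\alpha,z)
  = \frac{\partial}{\partial\alpha} Q(\alpha,z)\,\bigg|_{\alpha=1},
\end{equation*}
using $Q(1,z)=1$ in the last step. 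So the task becomes evaluating $\partial_\alpha Q$ at $\alpha=1$.

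To compute this partial derivative I would apply the first part of Lemma~\ref{29} with the fixed exponent $z$ and the inner function $G:\alpha\mapsto \sigma^{\frac{1-\alpha}{2z}}\rho^{\frac{\alpha}{z}}\sigma^{\frac{1-\alpha}{2z}}$, which is continuously differentiable and positive on $\supp\rho$ for $\alpha$ near $1$. This yields $\partial_\alpha Q = \Tr\bigl(z\,G(\alpha)^{z-1}G'(\alpha)\bigr)$. Differentiating $G$ in $\alpha$ produces three terms (from the two $\sigma$-factors and the $\rho$-factor), each carrying a $\frac{1}{z}$ prefactor with the appropriate $\log\sigma$ or $\log\rho$ insertion — this is structurally the same computation as in the proof of Proposition~\ref{30}, but differentiating in $\alpha$ rather than $z$. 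Evaluating at $\alpha=1$, where $G(1)=\rho$ and $G(1)^{z-1}=\rho^{\frac{z-1}{z}}$, the three terms combine, the factors of $z$ cancel, and cyclicity of the trace collapses the powers of $\rho$, leaving
\begin{equation*}
  \partial_\alpha Q\,\big|_{\alpha=1}
  = \Tr\bigl(\rho\log\rho\bigr) - \Tr\bigl(\rho\log\sigma\bigr)
  = \Tr\bigl(\rho(\log\rho-\log\sigma)\bigr) = D(\rho\|\sigma).
\end{equation*}

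The main obstacle I anticipate is justifying the manipulations rigorously rather than formally: namely, confirming that $G$ really is continuously differentiable and strictly positive on $\supp\rho$ in a neighborhood of $\alpha=1$ so that Lemma~\ref{29} applies (the condition $\sigma\gg\rho$ is what keeps $\sigma^{\frac{1-\alpha}{2z}}$ harmless on the support), and being careful with the non-commutativity of $\rho$ and $\sigma$ when collecting the three derivative terms. Non-commutativity does not obstruct the final answer because, after setting $\alpha=1$, every term has $G(1)^{z-1}G'(1)$ built from powers of $\rho$ together with a single $\log\rho$ or $\log\sigma$ insertion, and cyclicity of the trace lets us consolidate the pure-$\rho$ powers into $\rho$ itself; the $\log\sigma$ term survives flanked by $\rho$ on both sides but reduces under the trace to $\Tr(\rho\log\sigma)$. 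Once $\partial_\alpha Q|_{\alpha=1}=D(\rho\|\sigma)$ is established, the l'Hôpital conclusion is immediate and independent of $z$.
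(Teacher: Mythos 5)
Your proposal is correct and follows essentially the same route as the paper's own proof: l'H\^{o}pital's rule applied to $\log Q(\alpha,z)/(\alpha-1)$ after noting $Q(1,z)=1$, the first part of Lemma~\ref{29} applied to the inner function $\alpha\mapsto\sigma^{\frac{1-\alpha}{2z}}\rho^{\frac{\alpha}{z}}\sigma^{\frac{1-\alpha}{2z}}$, and the same three-term derivative computation collapsing under cyclicity of the trace to $\Tr(\rho(\log\rho-\log\sigma))$. The only blemish is the harmless slip ``$G(1)=\rho$'' (in fact $G(1)=\rho^{1/z}$), which your subsequent formula $G(1)^{z-1}=\rho^{\frac{z-1}{z}}$ already treats correctly.
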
 

\begin{proof}
Let us define $F(\alpha)=\sigma^{\frac{1-\alpha}{2z}}\rho^{\frac{\alpha}{z}}\sigma^{\frac{1-\alpha}{2z}}$. By Lemma \ref{19}, the function $f: \alpha\mapsto\Tr(F(\alpha)^z)$ is differentiable at $\alpha=1$ with derivative $\Tr(z(F(1))^{z-1}F'(1))$.
Now, by l'H\^{o}pital's rule, we have 
$$\lim_{\alpha\to1} D_{\alpha,z}(\rho\|\sigma)=\lim_{\alpha\to1}\frac{\log f(\alpha)}{\alpha-1}=\lim_{\alpha\to1}\frac{f'(\alpha)}{f(\alpha)}=f'(1),$$ 
and it remains to compute $f'(1)$. We have 
$$F'(1)=\frac{1}{z}\left(-\frac{1}{2}(\log\sigma)\rho^{\frac{1}{z}}+(\log\rho)\rho^{\frac{1}{z}}-\frac{1}{2}\rho^{\frac{1}{z}}\log\sigma\right). $$ Hence, Lemma~\ref{29} yields
\begin{eqnarray*}
f'(1)
&=&\Tr(z(F(1))^{z-1}F'(1))\\
&=&\Tr\left(\rho^{\frac{z-1}{z}}\left(-\frac{1}{2}(\log\sigma)\rho^{\frac{1}{z}}+(\log\rho)\rho^{\frac{1}{z}}-\frac{1}{2}\rho^{\frac{1}{z}}\log\sigma\right)\right) ,
\end{eqnarray*}
and the latter term evaluates to $\Tr(\rho(\log\rho-\log\sigma))$, concluding the proof.
\qed
\end{proof}

We are now ready to prove Theorem~\ref{31}.

\begin{proof}[Theorem~\ref{31}]
Since $g(1)\neq0$, we may assume that $g(x)\neq0$ for all $x\in J$, shrinking $J$ if necessary. As usual, define $F(\alpha,z)= \sigma^{\frac{1-\alpha}{2z}}\rho^{\frac{\alpha}{z}}\sigma^{\frac{1-\alpha}{2z}}$, and, moreover, $H(\alpha)=F(\alpha,g(\alpha))^{g(\alpha)}$. Then, 
$$H'(\alpha) = \frac{\partial}{\partial\alpha} F(\alpha,z)^z + g'(\alpha) \frac{\partial}{\partial z} F(\alpha,z)^z \bigg|_{z=g(\alpha)}, $$ which implies that $H$ is continuously differentiable. By employing a similar argument as in Proposition \ref{20}, we find
\begin{eqnarray*}
\lim_{\alpha\to1}D_{\alpha,g(\alpha)}(\rho\|\sigma)
&=& \lim_{\alpha\to1}\Tr(H'(\alpha))\\
&=&\lim_{\alpha\to1}\Tr\left(\frac{\partial}{\partial\alpha} F(\alpha,z)^z + g'(\alpha) \frac{\partial}{\partial z} F(\alpha,z)^z\bigg|_{z=g(\alpha)}\right)\\
&=&\lim_{\alpha\to1} \frac{\partial}{\partial\alpha} \Tr\left( F(\alpha,z)^z \right) +
g'(1) \lim_{\alpha\to1}\frac{\partial}{\partial z} \Tr\left(F(\alpha,z)^z\right) \bigg|_{z=g(\alpha)}\\ 
&=& D(\rho\|\sigma),
\end{eqnarray*} 
where we employed Proposition~\ref{30} in the last step.\qed
\end{proof}

\subsection{Differentiability of $\alpha$-$z$ Divergence as $\alpha\to 1$}

The main purpose of this section is to prove Theorem~\ref{newthm}, which follows from the following two propositions.

\begin{proposition}\label{33}
Let $\rho\in \mathcal{D}(\mathcal{H})$ and $\sigma\in \mathcal{P}(\mathcal{H})$ with $\sigma\gg\rho$. Then $D_{\alpha,1}(\rho\|\sigma)$ is differentiable at $\alpha=1$, and we have $$\frac{\rm d}{{\rm d} \alpha}D_{\alpha,1}(\rho\|\sigma) \bigg|_{\alpha=1}=\frac{V(\rho\|\sigma)}{2}.$$
\end{proposition}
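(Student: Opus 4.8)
The plan is to reduce the statement to a second-order Taylor expansion of a single scalar function. Set $Q(\alpha):=\Tr(\sigma^{1-\alpha}\rho^{\alpha})$, so that $D_{\alpha,1}(\rho\|\sigma)=\frac{\log Q(\alpha)}{\alpha-1}$, and note $Q(1)=\Tr(\rho)=1$. Writing $\sigma^{1-\alpha}=e^{(1-\alpha)\log\sigma}$ and $\rho^{\alpha}=e^{\alpha\log\rho}$ on $\supp\sigma$ exhibits $Q$ as a real-analytic function of $\alpha$ near $\alpha=1$ with $Q(1)=1>0$, so $g(\alpha):=\log Q(\alpha)$ is smooth near $1$ with $g(1)=0$. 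The Peano form of Taylor's theorem then gives $g(\alpha)=g'(1)(\alpha-1)+\tfrac12 g''(1)(\alpha-1)^2+o((\alpha-1)^2)$, whence
$$D_{\alpha,1}(\rho\|\sigma)=\frac{g(\alpha)}{\alpha-1}=g'(1)+\tfrac12 g''(1)(\alpha-1)+o(\alpha-1).$$
Since $g'(1)=Q'(1)=D(\rho\|\sigma)$ coincides with the limit value established in Proposition~\ref{20}, this shows that $D_{\alpha,1}$ extends differentiably to $\alpha=1$ with derivative $\tfrac12 g''(1)$, and it remains only to identify $g''(1)=\frac{Q''(1)}{Q(1)}-\left(\frac{Q'(1)}{Q(1)}\right)^2=Q''(1)-Q'(1)^2$.

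To compute $Q'(1)$ and $Q''(1)$ I would differentiate under the trace, using $\frac{\rm d}{{\rm d}\alpha}\sigma^{1-\alpha}=-(\log\sigma)\sigma^{1-\alpha}$ and $\frac{\rm d}{{\rm d}\alpha}\rho^{\alpha}=(\log\rho)\rho^{\alpha}$ (each factor commutes with its own logarithm). A first differentiation gives $Q'(\alpha)=\Tr\big(\sigma^{1-\alpha}(\log\rho)\rho^{\alpha}\big)-\Tr\big((\log\sigma)\sigma^{1-\alpha}\rho^{\alpha}\big)$, and evaluating at $\alpha=1$ (where $\sigma^{0}$ acts as $\Pi_\sigma$ and $\Pi_\sigma\rho=\rho$ because $\sigma\gg\rho$) yields $Q'(1)=\Tr(\rho(\log\rho-\log\sigma))=D(\rho\|\sigma)$. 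Differentiating once more and collecting the three resulting terms gives
$$Q''(\alpha)=\Tr\big((\log\sigma)^2\sigma^{1-\alpha}\rho^{\alpha}\big)-2\Tr\big((\log\sigma)\sigma^{1-\alpha}(\log\rho)\rho^{\alpha}\big)+\Tr\big(\sigma^{1-\alpha}(\log\rho)^2\rho^{\alpha}\big),$$
so that at $\alpha=1$ we obtain $Q''(1)=\Tr\big(\rho(\log\sigma)^2\big)-2\Tr\big(\rho(\log\rho)(\log\sigma)\big)+\Tr\big(\rho(\log\rho)^2\big)$.

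The one point requiring care—and the step I expect to be the main obstacle—is reassembling the three terms of $Q''(1)$ into $\Tr(\rho(\log\rho-\log\sigma)^2)$. The outer two terms match immediately by cyclicity of the trace, while the cross term requires $\Tr(\rho(\log\rho)(\log\sigma))=\Tr(\rho(\log\sigma)(\log\rho))$, which holds because $\rho$ commutes with $\log\rho$, allowing the two orderings to be identified after a cyclic rotation (this is exactly where the noncommutativity of $\rho$ and $\sigma$ could have caused trouble but does not). Granting this, $Q''(1)=\Tr(\rho(\log\rho-\log\sigma)^2)$ and therefore
$$g''(1)=Q''(1)-Q'(1)^2=\Tr(\rho(\log\rho-\log\sigma)^2)-\big(\Tr(\rho(\log\rho-\log\sigma))\big)^2=V(\rho\|\sigma),$$
giving $\frac{\rm d}{{\rm d}\alpha}D_{\alpha,1}(\rho\|\sigma)\big|_{\alpha=1}=\tfrac12 g''(1)=\tfrac12 V(\rho\|\sigma)$, as claimed. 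The remaining bookkeeping—justifying differentiation under the trace and the support conventions $\sigma^{0}=\Pi_\sigma$ and $\Pi_\sigma\rho=\rho$—is routine given $\sigma\gg\rho$.
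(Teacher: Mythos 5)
Your proposal is correct and follows essentially the same route as the paper: both reduce the claim to computing $f'(1)$ and $f''(1)$ for $f(\alpha)=\Tr(\rho^\alpha\sigma^{1-\alpha})$ and identify the derivative at $\alpha=1$ with $\tfrac12\frac{{\rm d}^2}{{\rm d}\alpha^2}\log f(\alpha)\big|_{\alpha=1}=\tfrac12\left(f''(1)-f'(1)^2\right)$, including the same handling of the cross term via cyclicity and $[\rho,\log\rho]=0$. The only (cosmetic) difference is that you extract this from a second-order Taylor expansion of $\log Q$ with Peano remainder applied directly to the difference quotient, whereas the paper reaches the same quantity through two applications of l'H\^{o}pital's rule; your variant is, if anything, slightly more self-contained since it establishes differentiability at $\alpha=1$ without invoking the limit-of-derivatives argument.
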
 

\begin{proof}
Let us define $f(\alpha)=\Tr\big(\sigma^{\frac{1-\alpha}{2}}\rho^{\alpha}\sigma^{\frac{1-\alpha}{2}}\big)=\Tr\left(\rho^{\alpha}\sigma^{1-\alpha}\right)$ for all $\alpha\in\mathbb{R}$. Then it is easy to see that $f$ is non-zero and infinitely differentiable everywhere on $\mathbb{R}$. This implies that the derivative of $D_{\alpha,1}(\rho\|\sigma)=\frac{\log f(\alpha)}{\alpha-1}$ exists for all $\alpha\in\mathbb{R}\setminus\{1\}$, and is equal to 
$$\frac{\rm d}{{\rm d}\alpha} D_{\alpha,1}(\rho\|\sigma) = \frac{(\alpha-1)\frac{{\rm d}}{{\rm d}\alpha}(\log f(\alpha))-\log f(\alpha)}{(\alpha-1)^2}.$$ 
By l'H\^{o}pital's rule, it suffices to show that $\lim_{\alpha\to1}\frac{{\rm d}}{{\rm d}\alpha}D_{\alpha,1}(\rho\|\sigma)$ exists, and is equal to $\frac12 V(\rho\|\sigma)$. First, since $\lim_{\alpha\to1}\log f(\alpha)=\log f(1)=\log\Tr(\rho)=0,$ and $\lim_{\alpha\to1}\frac{{\rm d}}{{\rm d}\alpha}(\log f(\alpha))$ exists, we have $$\lim_{\alpha\to1}\left[(\alpha-1)\frac{{\rm d}}{{\rm d}\alpha}(\log f(\alpha))-\log f(\alpha)\right]=0.$$ Furthermore, we find that 
\begin{eqnarray*}
&\,&\frac{{\rm d}}{{\rm d}\alpha}\left((\alpha-1)\frac{{\rm d}}{{\rm d}\alpha}(\log f(\alpha))-\log f(\alpha)\right)\\\textsc{}
&=&\frac{{\rm d}}{{\rm d}\alpha}(\log f(\alpha))+(\alpha-1)\frac{{\rm d}^2}{{\rm d}^2\alpha}(\log f(\alpha))-\frac{{\rm d}}{{\rm d}\alpha}(\log f(\alpha))\\
&=&(\alpha-1)\frac{{\rm d}^2}{{\rm d}^2\alpha}(\log f(\alpha)).
\end{eqnarray*}
This implies that 
\begin{eqnarray*}
\frac{\rm d}{{\rm d} \alpha}D_{\alpha,1}(\rho\|\sigma) \bigg|_{\alpha=1} &=&
\lim_{\alpha\to1}\frac{(\alpha-1)\frac{d}{d\alpha}(\log f(\alpha))-\log f(\alpha)}{(\alpha-1)^2} \\
&=&\lim_{\alpha\to1}\frac{\frac{{\rm d}}{{\rm d}\alpha}\left((\alpha-1)\frac{{\rm d}}{{\rm d}\alpha}(\log f(\alpha))-\log f(\alpha)\right)}{\frac{{\rm d}}{{\rm d}\alpha}\left((\alpha-1)^2\right)}\\
&=&\lim_{\alpha\to1}\frac12 \frac{{\rm d}^2}{{\rm d}^2\alpha}(\log f(\alpha)) \,.
\end{eqnarray*}
Hence, we find
\begin{eqnarray*}
\frac{\rm d}{{\rm d} \alpha}D_{\alpha,1}(\rho\|\sigma) \bigg|_{\alpha=1} &=& \frac{f(1)f''(1)-(f'(1))^2}{2(f(1))^2} = \frac{f''(1)-(f'(1))^2}{2}.
\end{eqnarray*}
Now, we note that $f'(\alpha)
%=\Tr\left(\rho^{\alpha}\log(\rho)\sigma^{1-\alpha}-\rho^{\alpha}\log(\sigma)\sigma^{1-\alpha}\right)
=\Tr\left(\rho^{\alpha}\sigma^{1-\alpha}(\log\rho-\log\sigma)\right)$, and 
\begin{eqnarray*}
f''(\alpha)
&=&\Tr\left(\rho^{\alpha}\log(\rho)\sigma^{1-\alpha}(\log\rho-\log\sigma)-\rho^{\alpha}\log(\sigma)\sigma^{1-\alpha}\right)\\
&=&\Tr\left(\rho^{\alpha}(\log\rho-\log\sigma)\sigma^{1-\alpha}(\log\rho-\log\sigma)\right).
\end{eqnarray*} 
This implies that $f'(1)=\Tr\left(\rho(\log\rho-\log\sigma)\right)$ and $f''(1)=\Tr\left(\rho(\log\rho-\log\sigma)^2\right)$, concluding the proof.
\qed
\end{proof}

\begin{proposition}\label{34}
Let $\rho\in \mathcal{D}(\mathcal{H})$ and $\sigma\in \mathcal{P}(\mathcal{H})$ with $\sigma\gg\rho$. Then $D_{\alpha,\alpha}(\rho\|\sigma)$ is differentiable at $\alpha=1$, and we have $$\frac{\rm d}{{\rm d} \alpha} D_{\alpha,\alpha}(\rho\|\sigma) \bigg|_{\alpha=1}=\frac{V(\rho\|\sigma)}{2} .$$ 
\end{proposition}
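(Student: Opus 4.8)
The plan is to follow the template of Proposition~\ref{33}. I set $g(\alpha):=\Tr\big(A(\alpha)^{\alpha}\big)$ with $A(\alpha):=\sigma^{\frac{1-\alpha}{2\alpha}}\rho\,\sigma^{\frac{1-\alpha}{2\alpha}}$, so that $D_{\alpha,\alpha}(\rho\|\sigma)=\frac{\log g(\alpha)}{\alpha-1}$ and $g(1)=\Tr(\rho)=1$. Once $g$ is known to be twice continuously differentiable near $\alpha=1$, the l'H\^{o}pital reduction performed in Proposition~\ref{33} applies with $f$ replaced by $g$ and yields $\frac{\rm d}{{\rm d}\alpha}D_{\alpha,\alpha}(\rho\|\sigma)\big|_{\alpha=1}=\frac12\big(g''(1)-(g'(1))^{2}\big)$. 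Writing $f(\alpha)=\Tr(\rho^{\alpha}\sigma^{1-\alpha})$ for the function of Proposition~\ref{33}, it thus suffices to show $g'(1)=f'(1)$ and $g''(1)=f''(1)$: the target value $\tfrac12 V(\rho\|\sigma)$ is then immediate from the identity $\frac12\big(f''(1)-(f'(1))^{2}\big)=\frac12 V(\rho\|\sigma)$ already established there.

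The mechanism I would use to compare $g$ and $f$ is to decouple the two parameters. Define $G(\alpha,z):=\Tr\big(F(\alpha,z)^{z}\big)$, where $F(\alpha,z)=\sigma^{\frac{1-\alpha}{2z}}\rho^{\frac{\alpha}{z}}\sigma^{\frac{1-\alpha}{2z}}$, so that $f(\alpha)=G(\alpha,1)$ and $g(\alpha)=G(\alpha,\alpha)$. Granting joint $C^{2}$-regularity of $G$ near $(1,1)$ (discussed below), the chain rule gives $g'(1)=G_{\alpha}(1,1)+G_{z}(1,1)$ and, after combining the two cross terms via equality of mixed partials, $g''(1)=G_{\alpha\alpha}(1,1)+2G_{\alpha z}(1,1)+G_{zz}(1,1)$; meanwhile $f'(1)=G_{\alpha}(1,1)$ and $f''(1)=G_{\alpha\alpha}(1,1)$. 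The whole statement therefore reduces to the vanishing of the three ``$z$-correction'' terms $G_{z}(1,1)$, $G_{zz}(1,1)$, and $G_{\alpha z}(1,1)$.

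These vanishings are handed to me by the earlier results. Proposition~\ref{30} asserts $G_{z}(1,z)=0$ for every $z\neq0$; setting $z=1$ gives $G_{z}(1,1)=0$, and differentiating the identity $z\mapsto G_{z}(1,z)\equiv0$ gives $G_{zz}(1,1)=0$. For the mixed term I would invoke the computation inside Proposition~\ref{20}, which shows that for each fixed $z$ one has $G_{\alpha}(1,z)=\Tr\big(\rho(\log\rho-\log\sigma)\big)$, so that $z\mapsto G_{\alpha}(1,z)$ is constant; differentiating in $z$ then gives $G_{\alpha z}(1,1)=0$. Hence $g'(1)=G_{\alpha}(1,1)=f'(1)$ and $g''(1)=G_{\alpha\alpha}(1,1)=f''(1)$, which is exactly what was needed.

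I expect the only real obstacle to be the regularity claim, namely that $G$ is jointly $C^{2}$ near $(1,1)$, which is what legitimises the chain rule, the differentiation of $G_{z}(1,z)\equiv0$, and the use of equality of mixed partials. The nuisance is that $F(\alpha,z)$ is merely positive semi-definite when $\rho$ is rank-deficient, so Lemma~\ref{19} cannot be applied to it verbatim. I would circumvent this by passing to $B(\alpha,z):=\rho^{\frac{\alpha}{2z}}\sigma^{\frac{1-\alpha}{z}}\rho^{\frac{\alpha}{2z}}$: with $X:=\sigma^{\frac{1-\alpha}{2z}}\rho^{\frac{\alpha}{2z}}$ one has $F=XX^{\dagger}$ and $B=X^{\dagger}X$, so $F$ and $B$ share their nonzero spectrum and $\Tr(F^{z})=\Tr(B^{z})=G(\alpha,z)$ for $z>0$, while $B(\alpha,z)$ is positive definite on the fixed subspace $\supp\rho$. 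On $\supp\rho$ the maps $t\mapsto\rho^{t}$ and $t\mapsto\sigma^{t}$ are real-analytic, so $B$ is a smooth $\mathcal{P}^{+}(\supp\rho)$-valued function of $(\alpha,z)$; Lemma~\ref{19} (its first-order content) together with the smoothness of $(Y,z)\mapsto\Tr(Y^{z})$ on $\mathcal{P}^{+}(\supp\rho)\times\mathbb{R}$ then delivers the required joint $C^{2}$ (indeed $C^{\infty}$) regularity of $G$.
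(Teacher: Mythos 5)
Your proof is correct, and it shares the paper's central idea---decouple the two parameters and show that the $z$-direction contributes nothing at $(1,1)$---but it executes this at a different level. The paper applies the chain rule directly to the divergence along the diagonal, writing $\frac{\rm d}{{\rm d}\alpha}D_{\alpha,\alpha}(\rho\|\sigma)=\frac{\partial}{\partial\alpha}D_{\alpha,z}(\rho\|\sigma)+\frac{\partial}{\partial z}D_{\alpha,z}(\rho\|\sigma)$ at $z=\alpha$ and then evaluating at $\alpha=z=1$, using Proposition~\ref{33} for the $\alpha$-partial and the constancy of $z\mapsto D_{1,z}(\rho\|\sigma)=D(\rho\|\sigma)$ for the $z$-partial; this is a purely first-order argument, but it tacitly assumes that $(\alpha,z)\mapsto D_{\alpha,z}(\rho\|\sigma)$, which has only a removable singularity at $\alpha=1$, is jointly differentiable there---precisely the kind of regularity the proposition is meant to establish. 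You instead work one level down, with the trace functional $G(\alpha,z)=\Tr\left(F(\alpha,z)^z\right)$: you show that $g(\alpha)=G(\alpha,\alpha)$ and $f(\alpha)=G(\alpha,1)$ agree to second order at $\alpha=1$, because $G_z(1,\cdot)\equiv0$ (Proposition~\ref{30}) and $G_\alpha(1,\cdot)$ is constant in $z$ (the computation inside Proposition~\ref{20}), and then rerun Proposition~\ref{33}'s l'H\^{o}pital reduction verbatim. This costs a second-order computation and a joint regularity argument, but it buys full rigor: the chain rule and the equality of mixed partials are applied only to the genuinely smooth $G$, never to the divergence itself. Moreover, your observation that $F(\alpha,z)$ is merely positive semi-definite when $\rho$ is rank-deficient, together with the fix via $B=X^{\dagger}X$ versus $F=XX^{\dagger}$ restricted to $\supp\rho$, repairs a gap the paper glosses over throughout this section (Lemma~\ref{19} is stated for $\mathcal{P}^+(\mathcal{H})$-valued functions yet applied to $F$ without comment). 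In short: same skeleton as the paper's proof, but a tighter implementation that also supplies the regularity the paper leaves implicit; the price is length, the gain is that every differentiation is legitimate.
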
 

\begin{proof}
Let us constrain ourselves to the line $\alpha=z$. Then we have 
\begin{eqnarray*}
\frac{\rm d}{{\rm d}\alpha}D_{\alpha,\alpha}(\rho\|\sigma) &=& \frac{\partial}{\partial\alpha}D_{\alpha,z}(\rho\|\sigma)+\frac{\rm{d} z}{\rm{d}\alpha}\frac{\partial}{\partial z}D_{\alpha,z}(\rho\|\sigma)\\
&=& \frac{\partial}{\partial\alpha}D_{\alpha,z}(\rho\|\sigma)+\frac{\partial}{\partial z}D_{\alpha,z}(\rho\|\sigma).
\end{eqnarray*}
The desired then follows since at $\alpha=z=1$, we have 
\begin{eqnarray*}
\frac{\partial}{\partial\alpha}D_{\alpha,z}(\rho\|\sigma) &=& \frac{\partial}{\partial\alpha}D_{\alpha,1}(\rho\|\sigma) \\
&=& \frac{\Tr\left(\rho(\log\rho-\log\sigma)^2\right)-(\Tr\left(\rho(\log\rho-\log\sigma)\right))^2}{2},
\end{eqnarray*}
and
$\frac{\partial}{\partial z}D_{\alpha,z}(\rho\|\sigma)=\frac{\partial}{\partial z}D_{1,z}(\rho\|\sigma)=0$.
\qed
\end{proof}

%If you'd like to thank anyone, place your comments here
%and remove the percent signs.
\begin{acknowledgements}
 {The results in this paper were achieved as part of a final year project at the School of Computing, Department of Computer Science at the National University of Singapore under the supervision of Prof.~Stephanie Wehner, and we want to thank her for discussions and for providing a conducive research environment at the Centre for Quantum Technologies. We would like an anonymous referee for providing us with Example 1.
 SL would also like to thank Jedrzej Kaniewski, Patrick Coles, and Mischa Woods for discussions leading to the results in this project. Finally, he would like to thank Nelly Ng, his tutor in the Introduction to Information Theory (CS3236) class, for getting him interested in information theory. SL acknowledges support from the Agency for Science, Technology and Research (A*STAR). MT is funded by the Ministry of Education (MOE) and National Research Foundation Singapore, as well as MOE Tier 3 Grant “Random numbers from quantum processes” (MOE2012-T3-1-009).}
\end{acknowledgements}

\end{document}